\newcommand\Oh{\ensuremath{\mathcal{O}}}
\newcommand{\pred}[1]{\mathrm{pred}{(#1)}}
\begin{document}

\mainmatter  % start of an individual contribution

% first the title is needed
\title{Chain minors are FPT}

% a short form should be given in case it is too long for the running head
\titlerunning{Fixed Parameter Tractable algorithm for Chain Minor problem}

% the name(s) of the author(s) follow(s) next
%
% NB: Chinese authors should write their first names(s) in front of
% their surnames. This ensures that the names appear correctly in
% the running heads and the author index.
%
\author{Jarosław Błasiok$^1$%
\and Marcin Kamiński$^2$}
\authorrunning{J. Błasiok /and M. Kamiński}
% (feature abused for this document to repeat the title also on left hand pages)

% the affiliations are given next; don't give your e-mail address
% unless you accept that it will be published
\institute{
$^1$Instytut Informatyki\\Uniwersytet Warszawski\\ {\tt jb291202@students.mimuw.edu.pl}\\\vspace{.3cm}
$^2$Département d'Informatique\\Universit\'e libre de Bruxelles\\and\\Instytut Informatyki\\Uniwersytet Warszawski\\{\tt mjk@mimuw.edu.pl}
}
%
% NB: a more complex sample for affiliations and the mapping to the
% corresponding authors can be found in the file "llncs.dem"
% (search for the string "\mainmatter" where a contribution starts).
% "llncs.dem" accompanies the document class "llncs.cls".parameterized
%

\toctitle{Lecture Notes in Computer Science}
\tocauthor{Authors' Instructions}
\maketitle

\begin{abstract}
Given two finite posets $P$ and $Q$, $P$ is a \emph{chain minor} of $Q$ if there exists a partial function $f$ from the elements of $Q$ to the elements of $P$ such that for every chain in $P$ there is a chain $C_Q$ in $Q$ with the property that $f$ restricted to $C_Q$ is an isomorphism of chains. \smallskip

We give an algorithm to decide whether a poset $P$ is a chain minor of a poset $Q$ that runs in time $\Oh(|Q| \log |Q|)$ for every fixed poset $P$. This solves an open problem from the monograph by Downey and Fellows [\emph{Parameterized Complexity}, 1999] who asked whether the problem was fixed parameter tractable.

%	We study certain $\preceq$ relation on finite posets called {\it chain minor}, and in particular \textsc{Chain Minor Ordering} problem of deciding for given posets $P$ and $Q$ $P\preceq Q$. We prove that \textsc{Chain Minor Ordering} problem admits FPT solution, when parameterized by $|P|$; our algorithm has time complexity $\Oh(f(|P|)|Q| \log |Q|)$, and linear space complexity, therefore solving open question by Downey and Fellows \cite{Downey}. We also present randomized version of this algorithm having slightly better time complexity (with respect to $|Q|$), namely $\Oh(f(|P|)|Q|)$.
	
%	It has long been known, due to work of Gustedt \cite{Gustedt}, that {\it chain minor} relation is well-quasi-order; such relations became recently widely studied, mainly because every set of objects closed under wqo relation could be characterized by finite set of forbidden objects with respect to that relation. In particular, presented algorithm for \textsc{Chain Minor Ordering} yields linear algorithm for recognition problem for any family closed under chain minor.
	\keywords{partially ordered sets, parameterized complexity, data structures and algorithms}
\end{abstract}\bigskip

\section{Introduction}
It is widely believed that NP-hard problems do not admit polynomial-time deterministic algorithms. Nevertheless, such problems tend to appear in practical applications and it is necessary to deal with them anyway. Among many approaches to NP-hard problems {\it parameterized complexity} has recently received a lot of attention. It was first studied systematically by Downey and Fellows in \cite{DowneyFellows}. The main idea of parameterized complexity is to equip the instance of a problem with a parameter and confine the superpolynomial behaviour of the algorithm to the parameter. Here we can efficiently solve large instances of the problem as long as the parameter is small.\medskip

\noindent{\it Parameterized complexity}. More formally, an instance of a parameterized problem is a pair $(I, k)$ where $k\in \mathbb{N}$. XP is the class of parameterized problems such that for every $k$ there is an algorithm that solves that problem in time $\Oh(|I|^{f(k)})$, for some function $f$ (that does not depend on $I$). One example is the \textsc{Clique} problem parameterized by the size of the clique defined as follows: given $(G, k)$ where $G$ is graph and $k$ is a natural number, is there a clique of size $k$ in $G$? One can simply enumerate all $k$-subsets of vertices to solve the problem in time $\Oh(n^{k+2})$, hence, in time polynomial for every fixed $k$.

Much more desirable parameterized complexity is FPT. A parameterized problem is called {\it fixed parameter tractable} (FPT) if there is an algorithm that for every instance $(I, k)$ solves the problem in time $\Oh(f(k) n^c)$ for some function $f$ (that does not depend on $n$). That is, for a fixed parameter $k$, then problem is solvable in polynomial time and the degree of the polynomial does not depend on $k$. \textsc{Satisfability} of boolean formula parameterized by number of variables is FPT; it can be solved by a brute force algorithm in time $\Oh(2^k m)$ where m is size of instance.

Downey and Fellows in their monograph \cite{DowneyFellows} included a list of open problems, asking whether they admit an FPT solution (``FPT suspects'') or are hard by means of parameterized complexity (``tough customers''). Recently, Fomin and Marx have revised this list of problems \cite{FominMarx}. Many of the problems from the original list have been solved since the publication of \cite{DowneyFellows}, yet \textsc{Chain minor} remains open. It was listed as a ``tough customer'' -- suspecting it is not fixed parameter tractable. However, we  prove otherwise. \smallskip

\noindent{\it Chain minors}. Chain minors were introduced by M\"oring and M\"uller in \cite{MohringMuller} in the context of scheduling stochastic project networks and first studied systematically by Gustedt in \cite{GustedtPhD} and in his PhD thesis \cite{Gustedt}. Gustedt proved that finite posets are \emph{well quasi ordered} by chain minors, that is, in any infinite sequence of posets there is a pair of posets such that one is a chain minor of the other. A consequence of this fact is that any class of graphs closed under taking chain minors can be characterized by a finite family of minimal forbidden posets. 

The {\sc Chain minor} problem is to decide, given two posets $P$ and $Q$, whether $P$ is a chain minor of $Q$. The parameterized approach to {\sc Chain Minor} is justified as Gustedt showed in \cite{GustedtPhD} that {\sc Chain Minor} is NP-hard (giving a reduction from \textsc{Precendence Constrained Scheduling}). Note that it is not known whether \textsc{Chain Minor} is NP-complete. There is no obvious nondeterministic polynomial-time algorithm for that problem, except for a very simple case --- Gustedt in his PhD thesis has proved that \textsc{Chain Minor}  is NP-complete when restricted to posets of height at most 3. \medskip

\noindent{\it Our results}. Gustedt also gave an XP algorithm for the \textsc{Chain Minor} problem \cite{Gustedt}. More specifically, he gave an algorithm that checks whether $P$ is a chain minor of $Q$ in time $\Oh(|P|^2 |Q|^{|P|} + f(|P|))$. We improve his result, giving two fixed parameter tractable algorithms (parameterized by $|P|$) --- randomized and deterministic --- where the former one runs in $\Oh(f(|P|) |Q|)$ time and the latter in $\Oh(f(|P|) |Q| \log |Q|)$ time. Both algorithms need linear memory.

The technique that we use to design the FPT algorithm is called {\it color coding} and was originally developed by Alon, Yster, and Zwick in \cite{colorCoding} to give the first FPT algorithm for the \textsc{k-Path} problem (= finding a path of size $k$ in a given graph). Since then, this technique has been successfully applied many times, yet in most of those examples colors where introduced artificially (as in \textsc{k-Path}). In our case, they are naturally derived from the problem definition.

\section{Definitions and basic facts}
A finite partially ordered set (\emph{poset}) is a pair $(V, <)$ where $V$ is a finite set and $<$ is a binary relation on $V$ that is transitive, irreflexive, and antisymmetric. A \emph{chain} in a poset is a sequence of elements $(v_1, v_2, \ldots v_n)$, $v_i \in V$ such that $v_i < v_j$, for all $1 \leq i < j \leq n$.

Given two finite posets $P=(V_P, <_P)$ and $Q=(V_Q, <_Q)$, we say that $P$ is a {\it chain minor} of $Q$ ($P \preceq Q$) if and only if there exists a partial function $f: V_Q \longrightarrow V_P$ with a property that for every chain $(c_1, c_2, \ldots, c_n)$ in $P$ there is a chain $(c'_1, c'_2, \ldots c'_n)$ in $Q$ such that $f(c'_i) = c_i$. In this case, $f$ a {\it witness} for $P \preceq Q$ and we write $P \preceq^f Q$. It is easy to check that $\preceq$ is a quasi-order (transitive and antisymmetric). One can easily check that if $V_P \subseteq V_Q$ and $<_P$ is induced by $<_Q$ (that is, $P$ is subposet of $Q$), then $P$ is also a chain minor of $Q$.

\begin{figure}[h]
\begin{center}
\includegraphics[scale=0.2]{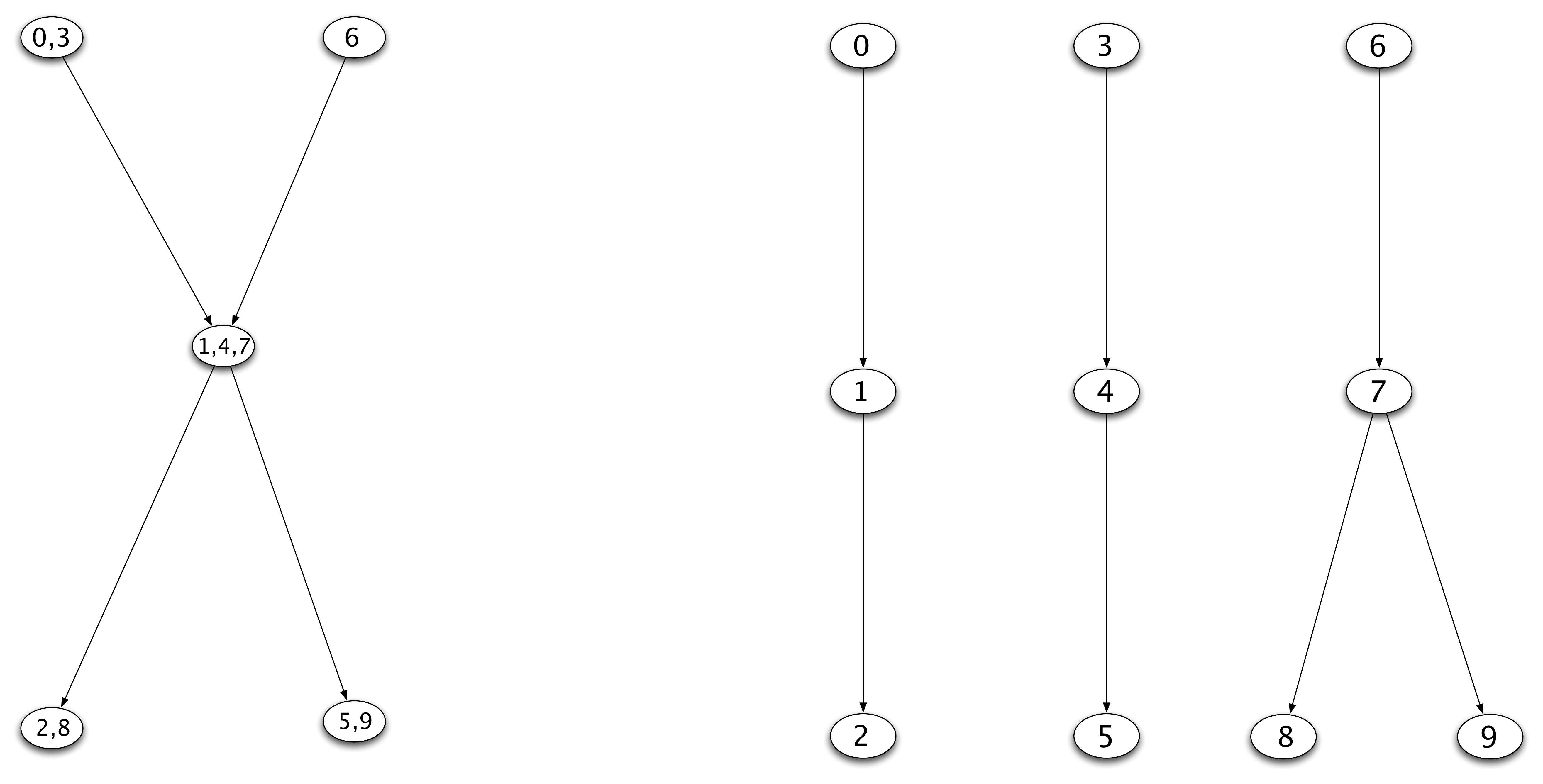}
\end{center}
\caption{The left poset $P$ is a chain minor of  the right poset $Q$ as certified by the witness function from the elements of the $Q$ to the elements of $P$.}
\label{fig-chainminor}
\end{figure}

%We will present a useful lemma, originally proved by Gustedt \cite{Gustedt} in a slightly different form as Lemma 4.1.
%
%LEMMA???

\section{Algorithm}

Our goal is to present a deterministic FPT algorithm. We will start with a randomized algorithm and  use a standard technique (of splitters) to derandomize it at the price of slightly worse time complexity. However, we need some auxiliary lemmas first.

\begin{lemma}
	\label{detlem}
	There is a deterministic algorithm which given two posets $P$ and $Q$ and a partial function $f: V_Q \longrightarrow V_P$ determines whether $P \preceq^f Q$ in time $\Oh(2^{|P|} |Q|)$.
\end{lemma}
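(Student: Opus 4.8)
The plan is to decide $P \preceq^f Q$ by a single sweep over $Q$ in a topological order (a linear extension of $<_Q$), maintaining for each element the family of $P$-chains that can be realized by a $Q$-chain ending at that element. Since a chain in $P$ is determined by its underlying set (it has a unique increasing order), such a family is a subfamily of $2^{V_P}$, which I would encode as a bit-vector of length $2^{|P|}$ indexed by the subsets $S \subseteq V_P$. This encoding is exactly what keeps the per-element work inside the $2^{|P|}$ budget.

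Concretely, for $q \in V_Q$ let $A(q)$ be the set of those $S \subseteq V_P$ that form a chain in $P$ and are realized by some chain of $Q$ whose top element is exactly $q$; by the definition of a witness such an $S$ must have maximum $f(q)$. The recurrence I would use is
\[
A(q) = \{\{f(q)\}\} \cup \{\, S \cup \{f(q)\} : S \in U(q),\ S \subseteq \pred{f(q)} \,\},
\]
where $U(q)$ collects every chain realizable by a $Q$-chain ending strictly below $q$, and where $\pred{f(q)}$ denotes the strict $P$-predecessors of $f(q)$. Writing the side condition as $S \subseteq \pred{f(q)}$ rather than $\max S <_P f(q)$ is deliberate: for a chain $S$ the two are equivalent, but the former is a single subset containment that can be imposed directly on the bit-vector. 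To obtain $U(q)$ without inspecting all pairs $q' <_Q q$, I would propagate down-closures along the Hasse diagram: keeping $R(q) = \bigcup_{q' \le_Q q} A(q')$, one has $U(q) = \bigcup_{q'' \lessdot q} R(q'')$ and $R(q) = U(q) \cup A(q)$, so $U(q)$ is just the bitwise OR of the vectors already computed at the immediate predecessors of $q$. When $f(q)$ is undefined the element contributes no new chain, so $A(q) = \emptyset$ and $R(q) = U(q)$.

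First I would verify the recurrence. The singleton term handles the one-element chain $(q)$; the extension term is justified because a $Q$-chain realizing $S$ and topped at some $q''' <_Q q$ may be prolonged by $q$ precisely when $f(q)$ strictly dominates every element of $S$ in $P$, which keeps $S \cup \{f(q)\}$ a chain and keeps it realized. A short induction along the topological order then shows each $A(q)$ is computed correctly and that the accumulated set $\mathrm{Realized} = \bigcup_q A(q)$ is closed under taking subchains. Unwinding the definition of a witness, $P \preceq^f Q$ holds iff every chain of $P$ lies in $\mathrm{Realized}$ (equivalently, iff every maximal chain does, by subchain-closure). The final test is one more bit-vector comparison against the precomputed set of all chains of $P$, which depends only on $P$.

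The main obstacle is the running time: keeping the per-step cost at $\Oh(2^{|P|})$ and the total at $\Oh(2^{|P|}|Q|)$. Two points need care. Each OR used to build $U(q)$ and $R(q)$ touches $2^{|P|}$ bits and is charged to one vertex or one cover edge, so the sum over $Q$ is $\Oh(2^{|P|}|Q|)$ once $|Q|$ is read as the size of the Hasse-diagram encoding (vertices plus cover edges). More delicate is realizing the extension map $S \mapsto S \cup \{f(q)\}$, restricted to $S \subseteq \pred{f(q)}$, as an $\Oh(2^{|P|})$ operation: mask the vector to the subsets of $\pred{f(q)}$ and then translate indices by the singleton $\{f(q)\}$, which is a fixed shift in the subset lattice and is valid because $f(q) \notin \pred{f(q)}$. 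All work that depends only on $P$ — the sets $\pred{p}$, the per-element shift tables, and the catalogue of chains of $P$ — is independent of $Q$ and folds into the parameter-dependent factor.
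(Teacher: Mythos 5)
Your proof is correct and achieves the claimed bound, but it takes a genuinely different route from the paper's. The paper spends the $2^{|P|}$ factor on an outer loop: it enumerates the (at most $2^{|P|}$) chains of $P$ one at a time and, for each fixed chain $(c_1,\dots,c_p)$, runs a scalar dynamic program over $Q$ computing $\mathrm{maxc}(q)$, the length of the longest prefix $c_1,\dots,c_j$ realizable inside $\pred{q}$; the chain is realized iff some $q$ attains $\mathrm{maxc}(q)=p$. You instead make a single pass over $Q$ and carry at each element the whole family of realizable $P$-chains as a $2^{|P|}$-bit vector, handling all chains of $P$ simultaneously via subset masking and a shift by $\{f(q)\}$ in the subset lattice; your recurrence and its justification (a chain is determined by its underlying set, the top of a realizing $Q$-chain maps to $\max_P S$, and $S\subseteq\pred{f(q)}$ is equivalent to $\max_P S<_P f(q)$ for chains) are sound. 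Both arguments must propagate information along $<_Q$ without touching all comparable pairs; you handle this explicitly by charging bitwise ORs to vertices and cover edges of the Hasse diagram, whereas the paper leaves the analogous cost of its $\max_{v<q}\mathrm{maxc}(v)$ step implicit --- on that point you are the more careful of the two. The main trade-off is space: the paper's one-chain-at-a-time DP keeps a single integer per element of $Q$ (which is what backs the ``linear space'' claims of Theorems \ref{thmRandom} and \ref{thmDetermin}), while your vectors occupy $\Theta(2^{|P|}|Q|)$ bits; this is still linear in $|Q|$ for fixed $P$, so nothing downstream breaks, but it is a weaker space guarantee. In exchange you get a single traversal of $Q$ instead of $2^{|P|}$ of them and, as a byproduct, the full set $R(q)$ of chains realizable below each $q$.
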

\begin{proof}
	For $q \in Q$, let $\pred{q}$ be the set of elements less or equal to $q$ in $Q$, that is, $\pred{q} = \{q' \in Q : q' \leq Q\}$.
	It is enough to iterate over  all chains of $P$, and for every chain $(c_1, c_2, \ldots, c_p)$  consider only those vertices of $V_Q$ which are mapped by $f$ to any of $c_i$ --- let us call them $Q'$. Let us now consider vertices from $Q'$ in topological order. For every vertex $q$, let us compute the maximum $j$ such that one can find a chain $c'_1, c'_2, \ldots, c'_j$ in set $\pred{q}$ (as usual, we demand $f(c'_i) = c_i$). Let us call that value $\mathrm{maxc}(q)$.

	To calculate $\mathrm{maxc}(q)$ knowing $\mathrm{maxc}$ of every predecessor, we just take
	\begin{equation*}
		\mathrm{maxc}(q) = \left\{ \begin{array}{ll}
				j & \;\;\;\; \mathrm{if} \max_{v<q} \mathrm{maxc}(v) = j-1 \land f(q)=c_j \\
				\max_{v<q} \mathrm{maxc}(v) & \;\;\;\; \mathrm{otherwise}
			\end{array}\right.
	\end{equation*}
The solution can be read off from $\mathrm{maxc}$ values.\qed
\end{proof}

\begin{lemma}
	\label{SmallUniverse}
	Let $P$ and $Q$ be finite posets and $k=|P|$. If $P \preceq^f Q$, then there is a subposet $Q^0$ of $Q$ of size at most $2^k k$ such that if $f'$ is equal to $f$ on $Q^0$, then $P \preceq^{f'} Q$.
\end{lemma}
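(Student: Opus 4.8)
The plan is to show that a witness function only ever ``uses'' a bounded number of elements of $Q$, namely one short witnessing chain for each chain of $P$, and that there are few chains in $P$ to worry about. The first ingredient I would record is a monotonicity observation: if a chain $(c_1, \ldots, c_n)$ of $P$ is witnessed by $(c'_1, \ldots, c'_n)$ in $Q$, then by transitivity of $<_Q$ every subsequence of $(c'_1, \ldots, c'_n)$ is a chain in $Q$ witnessing the corresponding subsequence (sub-chain) of $(c_1, \ldots, c_n)$. Consequently, to certify $P \preceq^f Q$ it is enough to provide, for each chain of $P$, one witnessing chain in $Q$; the witnessing requirement for all sub-chains is then automatic.

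Next I would construct $Q^0$ explicitly. Since $P \preceq^f Q$, for every chain $C = (c_1, \ldots, c_n)$ in $P$ I fix a single witnessing chain $\phi(C) = (c'_1, \ldots, c'_n)$ in $Q$ satisfying $f(c'_i) = c_i$. Let $Q^0$ be the union of the vertex sets of all the chosen chains $\phi(C)$, ranging over every chain $C$ of $P$, equipped with the order induced from $Q$ so that it is genuinely a subposet. For the size bound, note that each chain of $P$ is a totally ordered subset of $V_P$, so there are at most $2^k$ of them, and each $\phi(C)$ contributes at most $|C| \le k$ elements to $Q^0$; therefore $|Q^0| \le 2^k k$, as required.

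Finally I would verify that $f' := f\!\restriction_{Q^0}$ (indeed, any $f'$ agreeing with $f$ on $Q^0$) is still a witness. Given an arbitrary chain $C$ of $P$, the fixed chain $\phi(C)$ lies entirely inside $Q^0$ by construction, so $f'$ agrees with $f$ along it, and hence $\phi(C)$ witnesses $C$ under $f'$ exactly as it did under $f$; thus $P \preceq^{f'} Q$. There is no single hard step here: the only thing that really needs care is the counting, namely that it is the number of chains of $P$ (crudely at most $2^k$), rather than some larger quantity, that governs $|Q^0|$, together with the bookkeeping that each $\phi(C)$ is placed into $Q^0$ wholesale so that no witnessing chain is broken by the restriction.
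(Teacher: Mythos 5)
Your proof is correct and follows essentially the same route as the paper: fix one witnessing chain in $Q$ for each of the at most $2^k$ chains of $P$, let $Q^0$ be the union of these (giving $|Q^0| \le 2^k k$), and observe that any $f'$ agreeing with $f$ on $Q^0$ still admits the same witnessing chains. The extra remark about subsequences is harmless but not needed, since every chain of $P$ gets its own designated witness.
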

\begin{proof}
	Let $P \preceq^f Q$ and let $\cal C$ be the set of all chains in $\cal C$. $\cal C$ has at most $2^k$ elements (as any subset of the elements from $V_P$ forms at most one chain). For every chain $c = (c_1, \ldots, c_{n_c})\in {\cal C}$, take an arbitrary chain $(c'_1, \ldots, c'_{n_c})$ in $Q$ such that $f(c'_i) = c_i$, for $i=1, \ldots, n_{n_c}$.
	
	 Now let $V_{Q^0}$ be $\bigcup_c \{ c'_1, c'_2, \ldots, c'_{n_c}\}$. Notice that $|V_{Q^0}| \leq \sum_{c \in {\cal C}} n_c \leq \sum_{c \in {\cal C}} k \leq 2^k k$. If  $f'$ is equal to $f$ on $Q^0$, we have to check that given a chain $c_1, c_2, \ldots c_{n_c}$ in $P$ one can find preimages with respect to $f'$ of the elements of that chain such that the preimages form a chain in $Q$. It suffices to take the elements $c'_i$ from above; they belong to $Q^0$ by definition, thus $f'(c'_i) = f(c'_i) = c_i$ for  $i=1, \ldots, n_c$ and the elements $c'_1, \ldots, c'_n$ were chosen to be a chain.\qed
\end{proof}

\subsection{Randomized algorithm}
Now we will state and prove a key lemma for Theorem \ref{thmRandom}.
\begin{lemma}
	\label{prob}
	If $P \preceq Q$, then a function $g: V_P  \longrightarrow V_Q$ taken uniformly at random from the set of all such functions is a witness for $P \preceq Q$ with probability at least $k^{-2^k k}$, where $k = |P|$.
\end{lemma}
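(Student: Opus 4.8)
The plan is to prove this by \emph{color coding}, leveraging Lemma~\ref{SmallUniverse} to reduce the task to guessing the witness correctly on a bounded portion of $Q$. Since $P \preceq Q$, I would fix a witness $f: V_Q \longrightarrow V_P$ realizing $P \preceq^f Q$, and regard the random $g$ as a \emph{colouring} that assigns to each element of $Q$ an independent, uniformly chosen value (colour) in $V_P$. The guiding observation is that $g$ need not agree with $f$ everywhere: it suffices that $g$ coincide with $f$ on a small ``core'' of $Q$.

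First I would apply Lemma~\ref{SmallUniverse} to obtain a subposet $Q^0$ of $Q$ with $|V_{Q^0}| \le 2^k k$ such that every function equal to $f$ on $V_{Q^0}$ is again a witness for $P \preceq Q$. Here $f$ is defined on all of $V_{Q^0}$, because $Q^0$ is assembled from the chosen chain-representatives $c'_i$, all of which lie in the domain of $f$.

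Second I would estimate the probability that $g$ agrees with $f$ on the core. The $|V_{Q^0}|$ events ``$g(q)=f(q)$'', for $q \in V_{Q^0}$, are independent and each occurs with probability $1/k$, so their conjunction has probability $(1/k)^{|V_{Q^0}|}$. Since $|V_{Q^0}| \le 2^k k$ and $1/k \le 1$, this is at least $(1/k)^{2^k k} = k^{-2^k k}$.

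Finally I would combine the two facts: whenever $g$ matches $f$ on $V_{Q^0}$, Lemma~\ref{SmallUniverse} applied with $f' = g$ yields $P \preceq^g Q$, so $g$ is a witness; hence $g$ is a witness with probability at least $k^{-2^k k}$. There is no real technical obstacle left once Lemma~\ref{SmallUniverse} is in hand --- and that is precisely the point. A naive demand that $g$ equal $f$ on \emph{all} of $Q$ would give the useless bound $k^{-|Q|}$; the entire gain is the realization that agreement on a core whose size is bounded purely in terms of $k$ already forces $g$ to be a witness, which is what renders the probability independent of $|Q|$ and the eventual algorithm fixed-parameter tractable.
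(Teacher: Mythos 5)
Your proposal is correct and follows essentially the same route as the paper: fix a witness $f$, invoke Lemma~\ref{SmallUniverse} to obtain the core $Q^0$ of size at most $2^k k$, and lower-bound the probability that $g$ agrees with $f$ on $Q^0$ by $k^{-2^k k}$. If anything, you are slightly more careful than the paper's own write-up, noting explicitly that $f$ is defined on all of $V_{Q^0}$ and that the final step is an inequality (since $|V_{Q^0}|$ may be strictly less than $2^k k$) rather than an equality.
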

\begin{proof}
	Let $f$ be a witness for $P \preceq Q$. Now take $Q^0$ as in Lemma \ref{SmallUniverse}. It follows from Lemma  \ref{SmallUniverse} that it is sufficient to show that a function $g$ taken uniformly at random is equal to $f$ on $Q^0$ with high probability, as the probability of $g$ being a witness for $P \preceq Q$ is at least as large. Now the lemma follows from the following simple calculation.

	\begin{eqnarray*}
		\mathbb{P}_g(P \preceq^g Q) & \geq & \mathbb{P}_g(g|Q^0 = f|Q^0) \\
									& = & \prod_{v\in Q^0} \mathbb{P}(g(v) =  f(v)) \\
									& = & \prod_{v\in Q^0} \frac{1}{k} \\
								    & = & k^{-2^k k}
	\end{eqnarray*}\qed
\end{proof}

Now we are ready to prove Theorem \ref{thmRandom}.

\begin{theorem}
	\label{thmRandom}
	There is a randomized algorithm for \textsc{Chain Minor} with time complexity $\Oh({|P|}^{2^{|P|} {|P|}} |Q|)$ and linear space complexity.
\end{theorem}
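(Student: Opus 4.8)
The plan is to combine the exact verifier of Lemma~\ref{detlem} with repeated random sampling, using Lemma~\ref{prob} to guarantee that a witness surfaces quickly whenever one exists. Set $k = |P|$. The algorithm repeats the following trial $N = \lceil k^{2^k k} \rceil$ times: sample a partial function $g \colon V_Q \longrightarrow V_P$ uniformly at random (independently assigning to each element of $V_Q$ one of the $k$ elements of $P$), and then invoke the procedure of Lemma~\ref{detlem} to test whether $P \preceq^g Q$. If some trial reports that $g$ is a witness, the algorithm returns ``yes''; if all $N$ trials fail, it returns ``no''.

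For correctness I would treat the two error types separately. There are no false positives: the verifier of Lemma~\ref{detlem} is exact, so it accepts $g$ only when $g$ genuinely witnesses $P \preceq^g Q$, which in particular implies $P \preceq Q$; hence every ``yes'' answer is correct. For false negatives, suppose $P \preceq Q$. By Lemma~\ref{prob} each trial independently produces a witness with probability at least $k^{-2^k k}$, so the probability that all $N$ trials fail is at most $(1 - k^{-2^k k})^N \le e^{-k^{-2^k k} N} \le e^{-1} < 1/2$, which gives the desired one-sided bounded error.

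The running time and space then follow by bookkeeping. Sampling a single $g$ takes $\Oh(|Q|)$ time, and each call to Lemma~\ref{detlem} costs $\Oh(2^k |Q|)$, so one trial runs in $\Oh(2^k |Q|)$ time; multiplying by the $N = \lceil k^{2^k k}\rceil$ trials yields total time $\Oh(k^{2^k k}\, 2^k |Q|)$, and absorbing the $2^k$ factor into the parameter-dependent function produces the claimed $\Oh(|P|^{2^{|P|} |P|} |Q|)$ bound. Since the trials are executed one after another and reuse the same $\Oh(|Q|)$ workspace (storing the current $g$ together with the auxiliary arrays used in Lemma~\ref{detlem}), nothing persists between trials and the space stays linear.

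I expect the only delicate point to be the probability amplification: one must confirm that $N$ trials with per-trial success probability $k^{-2^k k}$ suffice to push the failure probability below a constant, which is exactly the inequality $(1 - p)^{1/p} \le e^{-1}$ applied with $p = k^{-2^k k}$ and the choice $N \ge 1/p$. The sampling and verification steps are routine given the earlier lemmas, and the space analysis amounts to observing that each trial overwrites the previous one.
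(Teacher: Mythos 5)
Your proposal is correct and follows essentially the same route as the paper: repeat $k^{2^k k}$ independent trials, each sampling a uniformly random function from $V_Q$ to $V_P$ and verifying it with the deterministic check of Lemma~\ref{detlem}, with one-sided error bounded via Lemma~\ref{prob} and the inequality $(1-p)^{1/p}\le e^{-1}$. Your accounting is in fact slightly more explicit than the paper's (you track the $2^k$ verifier factor and note it is absorbed into the parameter-dependent function), but the argument is the same.
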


\begin{proof}
Let $k = |P|$. It is enough to repeat the following procedure $k^{2^k k}$ times: take a random function $g$ and check whether it is a witness for $P \preceq Q$. If any of those function is a witness, return \textsc{Yes}; otherwise, return \textsc{No}. The desired time and space complexity follow from Lemma \ref{detlem}. Lemma \ref{prob} bounds the probability of an error by a constant. Indeed, if $P\preceq Q$, then the probability that the algorithm answers \textsc{No} is not greater then $(1 - 1/p_k)^{p_k}$, where $p_k = k^{2^k k}$, which is bounded by $(1 - 1/2)^2$, for $k\geq 2$ (and tends to $1/e$ as $k$ tends to infinity).\qed
\end{proof}

\subsection{Deterministic algorithm}
We will derandomize the algorithm from Theorem \ref{thmRandom} using a well-known derandomization technique of splitters. A $(n,k,l)$-splitter is a family of functions $\mathcal{F}$, $\mathcal{F} \ni f : \{1, \ldots n \}  \longrightarrow \{1, \ldots, k\}$, such that for every $W\subseteq \{1\ldots n\}$ there is some function $f\in \mathcal{F}$ which is injective on $W$. We will need the following theorem by Naor, Schulman, and Srinivasan from \cite{Naor}.

\begin{theorem}{(\cite{Naor})} There exists a $(n, k, k)$-splitter that can be constructed in time $\Oh(e^k k^{\Oh(\log k)} n \log n)$.
\end{theorem}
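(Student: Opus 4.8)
The plan is to prove the statement in two phases: first an existential, probabilistic argument that pins down the $e^k$ factor, and then an explicit deterministic construction that realizes essentially the same size while keeping the construction time near-linear in $n$. It is convenient to observe at the outset that an $(n,k,k)$-splitter is exactly a \emph{$k$-perfect hash family}: since the target has size $k$ and the relevant sets $W$ have size $k$, a function splits $W$ precisely when it is injective on $W$. So the goal is a small, efficiently constructible family $\mathcal{F}$ of maps $[n] \to [k]$ such that every $k$-subset is hashed injectively by some member.

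First I would establish existence. A uniformly random $f : [n] \to [k]$ is injective on a fixed $W$ with $|W| = k$ with probability $k!/k^k \geq e^{-k}$. Taking $t$ independent random functions, the probability that none is injective on $W$ is at most $(1 - e^{-k})^t$, and a union bound over the $\binom{n}{k} \leq n^k$ relevant sets shows that $t = \Oh(e^k\, k \log n)$ functions suffice with positive probability. This already exhibits the target size $e^k \cdot \mathrm{poly}(k) \cdot \log n$, but gives no construction and no control on the running time.

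To make it explicit I would reduce the universe before doing any expensive hashing. Using a code-based (or $\epsilon$-biased) $(n,k,k^2)$-perfect hash family --- for which explicit constructions of size $\mathrm{poly}(k)\log n$, evaluable in near-linear time, are known --- one maps $[n]$ into the tiny universe $[k^2]$ so that each fixed $k$-set is injective under some member. This step is responsible for the $n \log n$ factor and isolates the genuinely hard part on a universe whose size depends only on $k$. It then remains to build a $(k^2,k,k)$-splitter, for which I would use a recursive composition: apply a balanced splitter that distributes the $k$ marked elements into several groups of roughly equal size, recurse within each group to hash it injectively into its own block of bins, and concatenate the resulting maps. The recursion has depth $\Oh(\log k)$, so multiplying the per-level family sizes yields the $k^{\Oh(\log k)}$ overhead, while the base case of perfect hashing into exactly $k$ bins contributes the $e^k$ factor; composing with the universe-reduction family gives the claimed size.

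The main obstacle is the explicit balanced-splitter building block together with the bookkeeping of the composition. Getting the size of a single balanced $(m,k,\ell)$-splitter right, and verifying that multiplying the per-level sizes over the $\Oh(\log k)$ recursion levels telescopes to $e^k k^{\Oh(\log k)}$ rather than to a cruder $2^{\Oh(k)}$ bound, is the delicate calculation. Once the family has $\Oh(e^k k^{\Oh(\log k)} \log n)$ members, each a map on $[n]$, materializing it costs $\Oh(n)$ per function and hence $\Oh(e^k k^{\Oh(\log k)} n \log n)$ in total, matching the claim; it is this accounting, rather than any single inequality, where the real care is needed.
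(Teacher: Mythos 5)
The first thing to note is that the paper does not prove this statement at all: it is imported verbatim from Naor, Schulman, and Srinivasan \cite{Naor} and used as a black box, so there is no in-paper proof to measure your attempt against. Judged on its own terms, your outline is a faithful reconstruction of the actual strategy in \cite{Naor} --- identify $(n,k,k)$-splitters with $k$-perfect hash families, establish the $e^k\,\mathrm{poly}(k)\log n$ size bound probabilistically, reduce the universe from $[n]$ to $[k^2]$ with a cheap explicit family of size $\mathrm{poly}(k)\log n$, and then handle the small universe by distributing the $k$ marked elements into $\Oh(\log k)$ roughly balanced groups and taking a product of near-optimal perfect hash families, which is where both the $e^k$ and the $k^{\Oh(\log k)}$ factors arise. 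The architecture is right.

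As a proof, however, there are genuine gaps, and they sit exactly where you wave at the literature. First, the probabilistic argument gives existence only; the entire content of the theorem is \emph{deterministic} constructibility in the stated time, and you never say how the near-optimal perfect hash families on the reduced universe are actually found. In \cite{Naor} this is done by exhaustive search (or conditional probabilities) over a small explicit sample space, which is affordable only because the universe has already been shrunk to size $k^2$; without making that step explicit, the $e^k$ factor is not accounted for constructively. Second, the balanced splitter that spreads the $k$ elements evenly over $\Oh(\log k)$ groups is itself a nontrivial explicit object (built from near-$k$-wise-independent sample spaces), and one must additionally enumerate the $k^{\Oh(\log k)}$ possible profiles of group sizes --- a separate multiplicative factor your accounting omits. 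Third, you describe a recursion of depth $\Oh(\log k)$, whereas the standard construction uses a single level of splitting into $\Oh(\log k)$ blocks; a genuine depth-$\Oh(\log k)$ recursion would require precisely the telescoping size calculation that you flag as ``delicate'' but do not perform. None of these is a wrong turn, but each is a missing idea rather than a routine verification, so what you have is a correct roadmap to the cited result rather than a proof of it.
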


\begin{theorem}
	\label{thmDetermin}
	There is a deterministic algorithm for \textsc{Chain Minor} with time complexity $\Oh(f(|P|) |Q| \log |Q|)$ and linear space complexity.
\end{theorem}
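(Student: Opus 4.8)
The plan is to derandomize Theorem~\ref{thmRandom} by replacing the random draw of the candidate function with a deterministic enumeration driven by a splitter. Set $k = |P|$ and $m = 2^k k$. The starting point is Lemma~\ref{SmallUniverse}: whenever $P \preceq Q$, there is a witness $f$ whose relevant behaviour is confined to a set $Q^0 \subseteq V_Q$ of at most $m$ elements, in the sense that any function agreeing with $f$ on $Q^0$ is again a witness. Hence it suffices to guarantee that our enumeration produces, for at least one candidate $g : V_Q \to V_P$, the correct values of $f$ on every element of $Q^0$ --- and we must achieve this without knowing $Q^0$ in advance.

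First I would instantiate the splitter theorem with the parameter $m$ playing the role of $k$, obtaining an $(|Q|, m, m)$-splitter $\mathcal{F}$ of functions $\phi : V_Q \to \{1, \ldots, m\}$, constructible in time $\Oh(e^m m^{\Oh(\log m)} |Q| \log |Q|)$. Because $m$ depends only on $|P|$, this is of the form $f_1(|P|)\,|Q|\log|Q|$, and the Naor--Schulman--Srinivasan construction moreover produces a family of size $|\mathcal{F}| = f_2(|P|) \log |Q|$. By the defining property of the splitter, for the (unknown) set $Q^0$, which has at most $m$ elements, there is some $\phi \in \mathcal{F}$ that is injective on $Q^0$, and this $\phi$ therefore places the elements of $Q^0$ into pairwise distinct classes among $\{1, \ldots, m\}$.

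Next, for each $\phi \in \mathcal{F}$ and each colouring $\psi : \{1, \ldots, m\} \to V_P$, I would form the candidate $g = \psi \circ \phi : V_Q \to V_P$ and test whether $P \preceq^g Q$ using the procedure of Lemma~\ref{detlem}. Correctness follows from the two ingredients above: for the splitter function $\phi$ that is injective on $Q^0$, the elements of $Q^0$ lie in distinct classes, so there is a colouring $\psi$ with $\psi(\phi(v)) = f(v)$ for every $v \in Q^0$; the resulting $g$ then agrees with $f$ on $Q^0$ and is a witness by Lemma~\ref{SmallUniverse}, whence Lemma~\ref{detlem} reports \textsc{Yes}. No false positive is possible, since Lemma~\ref{detlem} answers \textsc{Yes} only on an actual witness.

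Finally I would total the cost. There are $|\mathcal{F}| = f_2(|P|)\log|Q|$ choices of $\phi$ and $k^m = |P|^{2^{|P|}|P|}$ choices of $\psi$, and each invocation of Lemma~\ref{detlem} costs $\Oh(2^{|P|}|Q|)$; multiplying these and absorbing the splitter construction time yields $\Oh(f(|P|)\,|Q|\log|Q|)$ for a suitable $f$. For the space bound I would generate and test the functions $\phi$ one at a time instead of storing the whole family, retaining only the current $\phi$ (a table of size $\Oh(|Q|)$), the colouring $\psi$ whose size is a function of $|P|$ alone, and the linear working memory of Lemma~\ref{detlem}, giving linear space overall. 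The main obstacle is careful bookkeeping of the parameters: the splitter must be applied at size $m = 2^k k$ rather than at $k$, and one must verify that the $e^m$-type construction factor together with the $k^m$ colourings collapses into a single function of $|P|$ independent of $|Q|$, while the genuine $|Q|$-dependence --- and in particular the family size $f_2(|P|)\log|Q|$, not merely the construction time --- keeps the overall $|Q|$-factor at $|Q|\log|Q|$.
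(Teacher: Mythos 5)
Your proposal is correct and follows essentially the same route as the paper: apply the $(|Q|, 2^kk, 2^kk)$-splitter, compose each splitter function with each map $\{1,\ldots,2^kk\}\to V_P$, and test each composition with Lemma~\ref{detlem}, with correctness via Lemma~\ref{SmallUniverse}. You are in fact somewhat more careful than the paper in noting that the splitter family size (not just its construction time) must be of the form $f_2(|P|)\log|Q|$ for the stated running time, and in arguing the linear space bound by streaming the splitter functions.
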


\begin{proof}
Given $P$ and $Q$, let us take $k=|P|$, $n=|Q|$. Fix a bijection between $\{1\ldots n\}$ and $V_Q$. Now we can just iterate through every $(n, 2^k k, 2^k k)$ splitter and every function from the set $\{1, \ldots, 2^k k\}$ to $P$, and check whether the composition of these two functions is a witness for $P \preceq Q$. 

To prove correctness of the algorithm, let us consider $P \preceq^f Q$ and take $Q^0$ as in Lemma $\ref{SmallUniverse}$. It follows from the definition of splitters that there exists a function $f$, such that $f$ is injective on $Q^0$. Then, just because we iterate over all functions from the set $\{ 1, \ldots k 2^k\}$ to $P$ at some point we take one, such that the composition equals  $f$ when restricted to $Q^0$. This pair yields a witness for $P \preceq Q$.\qed
\end{proof}

\section{Conclusions}

\begin{enumerate}
\item It is easy to prove that every class of posets closed under taking of chain minors can be characterized by a set of minimal forbidden chain minors. Gustedt proved in \cite{Gustedt} that posets are well quasi ordered. Consequently, each such set of forbidden chain minors is finite. Gustedt also gave an XP algorithm to decide whether a poset $H$ is a chain minor of a poset $Q$ when parameterized by the number of elements of  $H$. These two results show that for every class of posets $\cal P$ closed under taking chain minors there exists a polynomial-time algorithm deciding whether the input poset $Q$ is in $\cal P$. (The exponent of the polynomial depends on the class.) 

\hspace*{.5cm} We give an FPT algorithm to test whether a poset $H$ is a chain minor of a poset $Q$ when parameterized by the number of elements of  $H$. A consequence of our result is that for every class of posets $\cal P$ closed under taking chain minors there exists a $\Oh(|Q| \log |Q|)$ algorithm deciding whether the input poset $Q$ is in $\cal P$. \medskip

\item The project of Graph Minors of Robertson and Seymour is arguably one of the most significant achievements in modern graph theory. Robertson and Seymour proved that graphs are well quasi ordered under graph minors and gave an FPT algorithm to decide whether a graph $H$ is a minor of a graph $G$ when parameterized by $H$. They were also able to describe the structure of graphs that do not contain a fixed graph as a minor.  

\hspace*{.5cm} Is there a parallel theory possible for chain minors in posets? Gustedt proved in \cite{Gustedt} that chain minors are well quasi ordered and this work gives an FPT algorithm for the {\sc Chain Minor} problem. However, neither of the two elucidates the structure of posets with a forbidden chain minor. Is a structural characterization possible?

\hspace*{.5cm} In particular, it looks that characterizing posets without $p C_q$ as a chain minor is already the first challenge. ($p C_q$ is a poset consisting of $p$ disjoint chains each on $q$ vertices.) Note that any poset of size $p$ and height $q$ is a chain minor of $2^p C_q$. It is also quite straightforward that posets without $C_q$ chain minor are just posets of height less then $q$ but even a characterization of posets without $2C_q$ as a chain minor seems elusive. \medskip

\item Let us recall that Gustedt showed in \cite{Gustedt} that the \textsc{Chain Minor} problem is NP-hard but it is not known whether the problem is NP-complete. This is an interesting question. In particular, given two posets $P$, $Q$ and a function $w: Q  \longrightarrow P$, is there a polynomial-time deterministic algorithm deciding whether $w$ is a witness for $P\preceq Q$? Such algorithm would naturally give rise to an NP algorithm for \textsc{Chain Minor}.\medskip

\item At last, both our algorithms are double exponential in the parameter. Could this be improved to get a single exponential dependence?
\end{enumerate}

%
% ---- Bibliography ----
%


\begin{thebibliography}{}

\bibitem{colorCoding}
Alon N., Yuster R., Zwick U.:
\newblock \emph{Color-coding}.
\newblock Journal of the ACM, 42(4), p. 844 -- 856, 1995

\bibitem{DowneyFellows}
Downey, R.G., Fellows, M.R.:
\newblock Parameterized Complexity. 
\newblock Springer-Verlag, NewYork (1999)

\bibitem{FominMarx} 
Fomin, F.V., Marx D.:
\newblock \emph{FPT suspects and tough customers: Open problems of Downey and Fellows}.
\newblock Submitted.

\bibitem{GustedtPhD} 
Gustedt J.:
\newblock \emph{Algorithmic Aspects of Ordered Structures}.
\newblock PhD thesis, Berlin, 1992

\bibitem{Gustedt} 
Gustedt J.:
\newblock \emph{Well Quasi Ordering Finite Posets and Formal Languages}.
\newblock Journal of Combinatorial Theory, Series B, 65(1), p. 111 -- 124, 1995

\bibitem{MohringMuller} 
M\"ohring, R.H, M\"uller R.:
\newblock \emph{A combinatorial approach to obtain bounds for stochastic project networks}.
\newblock Tech. report, Technische Universit\"at Berlin, 1992

\bibitem{Naor} 
Naor M., Schulman L. J., Srinivasan A.:
\newblock \emph{Splitters and near-optimal derandomization}.
\newblock Proceedings of the 36th Annual Symposium on Foundations of Computer Science FOCS, 1995

\end{thebibliography}
\end{document}